\documentclass[11pt, reqno]{amsart}       
\usepackage{graphicx}

\usepackage{amsmath}

\usepackage{amssymb}
\usepackage{pifont}

\usepackage[colorlinks=true,
linkcolor=blue,
citecolor=blue,
urlcolor=blue]{hyperref}
\newtheorem{thm}{Theorem}
\newtheorem{lem}{Lemma}
\newtheorem{prop}{Proposition}
\newtheorem{cor}{Corollary}
\theoremstyle{definition}

\newtheorem*{rem}{Remarks}

\newtheorem*{rem2}{Remark}

\newtheorem*{acknowledgement}{Acknowledgement}

\newcommand{\vertiii}[1]{{\left\vert\kern-0.25ex\left\vert\kern-0.25ex\left\vert
		#1 \right\vert\kern-0.25ex\right\vert\kern-0.25ex\right\vert}}

\def \lim   {\text {\rm lim}}

\begin{document}
	
	\title[]{Recoverable states on von-Neumann algebras}
	
	\author{Saptak Bhattacharya}
	
	\address{Indian Statistical Institute\\
		New Delhi 110016\\
		India}
	\email{saptak21r@isid.ac.in}
	
	
	
	
	\begin{abstract}
		Let $(\mathcal{M},\tau)$ and $(\mathcal{N},\tau^{\prime})$ be tracial von-Neumann algebras and let $\phi:\mathcal{M}\to\mathcal{N}$ be a strictly completely positive, trace preserving map. Given a positive, invertible $B\in\mathcal{M}$ with $\tau(B)=1$, a state on $\mathcal{M}$ given by a positive $A\in L^1(\mathcal{M}, \tau)$ is said to be recoverable if $\mathcal{R}(\phi(A))=A$ where $\mathcal{R}$ is the Petz recovery map corresponding to $B$ and $\phi$. In this paper, we study recoverable states and show how an arbitrary state can be made close to a recoverable state via iterates of $\mathcal{R}\circ\phi$. We show that there exists a completely positive, trace preserving map $\psi:\mathcal{M}\to\mathcal{M}$ such that $\psi(A)$ is recoverable for all $A$ and $(\mathcal{R}\circ\phi)^n\to\psi$ in norm as operators on $L^p(\mathcal{M},\tau)$ for all $1\,\textless p\,\textless\infty$, and discuss potential applications to quantum information theory. We also show that this convergence holds strongly in $L^1$. Finally, we prove an interesting decomposition theorem for normal states on $\mathcal{M}$.   
	\end{abstract}
	\subjclass[2020]{ 81P17, 46L52}
	
	\keywords{states, recoverability, entropies}
	\date{}
	\maketitle
	
	\section{Introduction} We briefly recall some basics on non-commutative $L^p$ spaces and the sandwiched quasi-relative entropies, which we will be using a lot throughout this paper. Let $\mathcal{H}$ be a Hilbert space and let $B(\mathcal{H})$ be the set of all bounded operators on $\mathcal{H}$. Let $\mathcal{M}\subset B(\mathcal{H})$ be a von-Neumann algebra equipped with a normal, faithful, tracial state $\tau:\mathcal{M}\to\mathbb{C}$. The commutant $\mathcal{M}^{\prime}$ is defined as $$\mathcal{M}^{\prime}=\{X\in B(\mathcal{H}):AX=XA\,\,\,\forall\, A\in\mathcal{M}\}.$$ Consider a closed densely defined operator $A:D(A)\to\mathcal{H}$. $A$ is said to be {\it affiliated} to $\mathcal{M}$ if $U^*AU=A$ for all unitaries in the commutant $\mathcal{M}^{\prime}$. Equivalently, let $A=V|A|$ be the polar decomposition of $A$. Then, $A$ is affiliated to $\mathcal{M}$ if and only if the partial isometry $V$ and all the spectral projections of $|A|$ lie in $\mathcal{M}$. 
	\medskip

	Note that the trace $\tau$ naturally induces a family of $p$-norms on $\mathcal{M}$ given by $$||X||_p=[\tau(|X|^p)]^{1/p}$$ for all $p\geq 1$, $X\in\mathcal{M}$. Now, let $A:D(A)\to\mathcal{H}$ be an affiliated operator. Given a Borel $E\subset\mathbb{R}$ let $P(E)$ be the corresponding spectral projection of $|A|$. Write $$|A|=\int_0^{\infty}\lambda\, dP(\lambda).$$ Then the Segal-Dixmier non-commutative $L^p$ space $L^p(\mathcal{M}, \tau)$ (see \cite{sil, nel} for more details) consists of all such $A$ affiliated to $\mathcal{M}$ with $$\int_{0}^{\infty}\lambda^p\, d\tau(P(\lambda))\,\textless\,\infty.$$ The norm $||.||_p$ extends naturally to $L^p(\mathcal{M},\tau)$ by $$||A||_p=[\int_0^{\infty}\lambda^p\, d\tau(P(\lambda))]^{1/p}.$$ This makes $L^p(\mathcal{M},\tau)$ a Banach space, and a Hilbert space for $p=2$.  
	\medskip
	
	Let $\psi:M\to\mathbb{C}$ be a normal state. Then there exists a positive $A\in L^1(\mathcal{M},\tau)$ such that $\tau(A)=1$ and $\psi(X)=\tau(AX)$ for all $X\in\mathcal{M}$. Let $B\in\mathcal{M}$ be positive and invertible with $\tau(B)=1$. Given any positive $A\in L^p(\mathcal{M},\tau)$ with $\tau(A)=1$ the sandwiched quasi-relative entropy $\mathcal{S}_p$ is defined as \[\mathcal{S}_p(A|B)=\tau[(B^{-1/2q}AB^{-1/2q})^p]\label{e1}\tag{1}\] where $\frac{1}{p}+\frac{1}{q}=1$. Note that $\mathcal{S}_1(A|B)=||A||_1$ and $\mathcal{S}_{\infty}(A|B)=||B^{-1/2}AB^{-1/2}||$. Further details on these entropies in the finite dimensional setup can be found in \cite{ml, sb, gw, jen}.
	\medskip
	
	Let $(\mathcal{N},\tau^{\prime})$ be another von-Neumann algebra. A {\it quantum channel} is a completely positive, trace preserving (henceforth called CPTP) map $\phi:\mathcal{M}\to\mathcal{N}$.

	Given a quantum channel $\phi:(M,\tau)\to(N,\tau^{\prime})$ we can extend it continuously to the Hilbert space $L^2(\mathcal{M},\tau)$. Its adjoint $\phi^*$ maps $L^2(\mathcal{N},\tau^{\prime})$ to $L^2(M,\tau)$. $\phi$ is said to be a strict CPTP map if it takes positive invertible elements to postive invertible elements. Let us discuss a couple of examples of such maps.
	\medskip
	
	\begin{enumerate}
		\item Let $\phi:M_2(\mathcal{M})\to M_2(\mathcal{M})$ be defined as \[\phi\begin{pmatrix} A & B\\ C & D\end{pmatrix}=\begin{pmatrix} A & O\\ O & D\end{pmatrix}.\] This is called a {\it pinching}. It is easy to see that this is strictly CPTP.
		\medskip
		
		\item More generally, any {\it unital} channel $\phi:\mathcal{M}\to\mathcal{N}$ is strictly CPTP. This is because if $B\geq\delta I$ for some $\delta\,\textgreater\, 0$, $\phi(B)\geq\delta I$ by unitality and positivity. This class of examples includes, in particular, trace preserving conditional expectations onto von-Neumann subalgebras of $\mathcal{M}$.
		\medskip

	\end{enumerate} 
	
		 Given a strict CPTP map $\phi:(\mathcal{M},\tau)\to(\mathcal{N},\tau^{\prime})$ and a positive, invertible $B\in \mathcal{M}$, define the Petz recovery map $\mathcal{R}:L^2(\mathcal{N},\tau^{\prime})\to L^2(\mathcal{M},\tau)$ by \[\mathcal{R}(Y)=B^{1/2}\phi^*(\phi(B)^{-1/2}Y\phi(B)^{-1/2})B^{1/2}.\label{e2}\tag{2}\] This is extremely important in quantum information theory, chiefly in the context of state recoverability and stability refinements of quantum data processing inequalities (see \cite{pr, hp, ws, mw, jw, fh, fh2}). Later, we will see that this gives an important example of a strict CPTP map which is not unital.

	Note that even though $\phi^*(\phi(B)^{-1/2}Y\phi(B)^{-1/2})$ might be unbounded, multiplication on both sides with the {\it bounded} operator $B^{1/2}$ makes sense in non-commutative $L^2$. 
	\medskip
	
	We now focus on the entropies $\mathcal{S}_p$, as defined in $\eqref{e1}$. It follows from the work of Jen\v{c}ov\'a (\cite{jencova}) that for any strictly CPTP $\phi:\mathcal{M}\to\mathcal{N}$ and $B\in\mathcal{M}$ positive and invertible with $\tau(B)=1$, \[\mathcal{S}_p(\phi(A)|\phi(B))\leq \mathcal{S}_p(A|B)\label{e3}\tag{3}\] for all positive $A\in L^p(\mathcal{M},\tau)$ with $\tau(A)=1$. This is called the {\it data processing inequality}. For $\mathcal{S}_2$, this inequality was initially refined in the finite dimensional setting by Cree and Sorce in \cite{cs}. Later, Gao et. al. \cite{lg} and we \cite{sap} independently refined it further. Recently, in \cite{sap1}, we obtained the universal bound \[\begin{aligned}4[1-F(\psi_{A}|\psi_{\,\mathcal{R}(\phi(A))})]^2&\leq||A-\mathcal{R}\circ\phi(A)||^2_1\\&\leq [\mathcal{S}_2(A|B)-\mathcal{S}_2(\phi(A)|\phi(B))]\end{aligned}\label{e4}\tag{4}\] in the infinite dimensional setting, where $\psi_{A}$ and $\psi_{\mathcal{R}(\phi(A))}$ are the normal states on $\mathcal{M}$ corresponding to $A$ and $\mathcal{R}\circ\phi(A)$ respectively and $F$ is the Uhlmann fidelity defined as the square root of the transition probability in \cite{uhl2, alb, uhal, uhl}. 
	\medskip
	
	Inequality $\eqref{e4}$ gives the first universal recoverability bound in infinite dimensions and has an interesting physical interpretation in quantum information theory. Let us say Alice and Bob decide to communicate among themselves using a channel $\phi$. They fix a key $B$ known to both, and Alice sends her message through a state $A$. Inequality $\eqref{e4}$ quantifies to what extent Bob can recover the original state $A$ from $\phi(A)$ with the Petz map based on the change in relative entropy with respect to $B$.
	\medskip
	
	In this paper, we deal with the above problem from the perspective of the sender, Alice. Ideally, Alice would want to send her message to Bob through a state $A$ which can be perfectly recovered by $\mathcal{R}$, but this cannot be expected in real life. If she sends any state $A$, she does not know for sure whether Bob will be able to recover it well, since she cannot control the entropy loss $\mathcal{S}_2(A|B)-\mathcal{S}_2(\phi(A)|\phi(B))$. 
	\medskip
	
	To resolve this, consider $S_{\mathcal{R}, B}=\{X\in L^1(\mathcal{M}): \mathcal{R}\circ\phi(X)=X\}$. A state $A\in S_{\mathcal{R}, B}$ is said to be {\it recoverable} and $S_{\mathcal{R}, B}$ can be identified with the space of all normal, recoverable functionals on $\mathcal{M}$. We show that there exists a channel $\psi:L^1(\mathcal{M})\to L^1(\mathcal{M})$ such that $\psi(\mathcal{M})\subset\mathcal{M}$, $\mathrm{im}\, \psi=S_{\mathcal{R}, B}$ and $(\mathcal{R}\circ\phi)^n\to\psi$ in norm as operators on $L^p(\mathcal{M})$, $1\,\textless\, p\,\textless\,\infty$. 
	\medskip
	
	This shows that for any $\varepsilon\,\textgreater\, 0$, there exists $N$ such that for any arbitrary state $A$, as long as the $p$-norm of $A$ is bounded above by some fixed $C\,\textgreater\, 0$, we have $$||(\mathcal{R}\circ\phi)^n(A)-\widetilde{\psi}(A)||_1\leq ||(\mathcal{R}\circ\phi)^n(A)-\widetilde{\psi}(A)||_p\leq\varepsilon.$$
	\medskip
	
	Thus, Alice can take iterates $(\mathcal{R}\circ\phi)^n(A)$ to get arbitrarily close to a recoverable state. We note that such norm convergence of iterates is not really common in infinite dimensions, where it is more standard to have the weaker notion of strong operator convergence instead. This is a factor which makes our result interesting. It is to be noted though, that we have only been able to show strong operator convergence in $L^1$, unlike norm convergence in the other $L^p$ spaces, so this might be an interesting direction to pursue in future.
	\medskip
	
	Additionally, we also prove a decomposition theorem, showing that any normal state $A$ in $L^1$ can be written uniquely as $$A=A_0 + C$$ where $A_0\in S_{\mathcal{R}, B}$ is a recoverable state and $C$ is self-adjoint with $(\mathcal{R}\circ\phi)^n(C)\to 0$ in $L^1$. Hence, any state can be decomposed uniquely into a recoverable part and a part which vanishes in the limit under the orbit action of $\mathcal{R}\circ\phi$.
	
	\section{Main results}
	
	We first need to describe a completely positive map $\phi:L^2(\mathcal{M},\tau)\to L^2(\mathcal{N},\tau^{\prime})$. Note that for any $n$, $M_n(\mathcal{M})$ is equipped with the trace \[\tau_n\begin{pmatrix}X_{11} & \dots & X_{1n}\\\vdots & &\vdots\\X_{n1} & \dots & X_{nn}\end{pmatrix}=\sum_j \tau(X_{jj}).\] Given any $\{X_{ij}\}_{1\leq i,j\leq n}$ in $L^2(\mathcal{M})$, consider $L^2$-Cauchy sequences $(X_{ij}^{m})_m$ in $\mathcal{M}$ such that for each $i,\,j$, $X_{ij}^m\to X_{ij}$ as $m\to\infty$. Then the sequence of matrices \[\tilde{X}_m=\begin{pmatrix}X_{11}^m & \dots & X_{1n}^m\\\vdots & &\vdots\\X_{n1}^m & \dots & X_{nn}^m\end{pmatrix}\] is Cauchy in $L^2(M_n(\mathcal{M}))$, and thus, has a limit, which we denote by the block matrix \[\begin{pmatrix}X_{11} & \dots & X_{1n}\\\vdots & &\vdots\\X_{n1} & \dots & X_{nn}\end{pmatrix}.\] Conversely, any element in $L^2(M_n(\mathcal{M}))$ can be represented this way, so we identify $$M_n(L^2(\mathcal{M})):=L^2(M_n(\mathcal{M})).$$ Now, notice that any linear map $\phi:L^2(\mathcal{M})\to L^2(\mathcal{N})$ can be lifted to $\phi_n: M_n(L^2(\mathcal{M}))\to M_n(L^2(\mathcal{M}))$ by \[\phi_n\begin{pmatrix}X_{11} & \dots & X_{1n}\\\vdots & &\vdots\\X_{n1} & \dots & X_{nn}\end{pmatrix}=\begin{pmatrix}\phi(X_{11}) & \dots & \phi(X_{1n})\\\vdots & &\vdots\\\phi(X_{n1}) & \dots & \phi(X_{nn})\end{pmatrix}.\] We say such a map $\phi$ is {\it completely positive} if for all $n$, $\phi_n$ maps positive elements of $L^2(M_n(\mathcal{M}))$ to $L^2(M_n(\mathcal{N}))$. 
	\medskip
	
	It was proved in \cite{sap1} that if $\phi:\mathcal{M}\to\mathcal{N}$ is completely positive and trace preserving, then it extends continuously to $L^2(\mathcal{M},\tau)$ and the adjoint $\phi^*:L^2(\mathcal{N})\to L^2(\mathcal{M})$ is completely positive and unital. Furthermore, if $\phi$ is strictly CPTP, and $B\in\mathcal{M}$ is positive, invertible with $\tau(B)=1$, the associated Petz map $\mathcal{R}:L^2(\mathcal{N})\to L^2(\mathcal{M})$ given by $\eqref{e2}$ is also CPTP. Infact, as we will soon see, both $\phi^*$ and $\mathcal{R}$ map $\mathcal{N}$ to $\mathcal{M}$.
	\medskip
	
	We begin with a basic lemma, which will be useful throughout.
	\medskip
	
	\begin{lem}\label{l1}Let $\phi:L^2(\mathcal{M})\to L^2(\mathcal{N})$ be a completely positive map such that $\phi(I)=I$. Then, $\phi(X)\in\mathcal{N}$ whenever $X\in\mathcal{M}$ and $||\phi(X)||\leq ||X||$ for all $X\in\mathcal{M}$.\end{lem}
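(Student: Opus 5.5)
The plan is to run the standard Kadison--Schwarz/Russo--Dye style argument for unital positive maps, using only $2\times 2$ positivity in $L^2$. For self-adjoint $X\in\mathcal{M}$ with $\|X\|\le 1$ we have $-I\le X\le I$. So $I\pm X\ge 0$ in $L^2(\mathcal{M})$, and positivity together with $\phi(I)=I$ gives $I\pm\phi(X)\ge 0$ in $L^2(\mathcal{N})$. Here $\phi(X)$ is self-adjoint, since $\phi$ preserves adjoints: $X=X_+-X_-$ with $X_\pm\ge0$, and positive elements map to positive, hence self-adjoint, elements. The key step is to conclude that a self-adjoint affiliated operator $Y\in L^2(\mathcal{N})$ with $-I\le Y\le I$ is actually bounded with $\|Y\|\le1$. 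For this, take the spectral projection $P=P(|Y|>1)$. If $P\ne0$, say the part of the spectrum above $1$ has nonzero projection $Q\le P$. Then $\tau'(Q(I-Y)Q)<0$, where the trace makes sense because $Q$ is a spectral projection of $Y$, so $QYQ=YQ$ is affiliated. This contradicts $I-Y\ge0$ by faithfulness of $\tau'$. Positivity in $L^2$ means positivity as a self-adjoint affiliated operator, so the spectral argument is legitimate. The case of spectrum below $-1$ is handled the same way using $I+Y\ge0$. Hence $Y\in\mathcal{N}$ and $\|Y\|\le 1$.

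For general $X\in\mathcal{M}$ with $\|X\|\le1$, I would use the $2\times2$ trick. The matrix
\[\begin{pmatrix} I & X\\ X^* & I\end{pmatrix}\]
is positive in $M_2(\mathcal{M})$. By complete positivity (the $n=2$ case) and unitality,
\[\begin{pmatrix} I & \phi(X)\\ \phi(X)^* & I\end{pmatrix}\ge 0\]
in $L^2(M_2(\mathcal{N}))$. By the self-adjoint case applied in $M_2(\mathcal{N})$ with trace $\tau'_2$, this matrix is bounded. Hence its corner $\phi(X)$ is bounded, i.e. $\phi(X)\in\mathcal{N}$. Positivity of a bounded operator matrix $\begin{pmatrix} I & Y\\ Y^*& I\end{pmatrix}$ then forces $\|Y\|\le1$, via the standard test vectors $(\xi,-Y^*\xi)$ or via $I-YY^*\ge0$. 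Scaling gives $\|\phi(X)\|\le\|X\|$ for all $X$.

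The main obstacle is the step passing from the order inequality $-I\le Y\le I$ for a possibly unbounded affiliated operator to boundedness. I would handle it carefully with spectral projections and faithfulness of the trace, as sketched above. To apply this to the $2\times2$ matrix, one needs boundedness of a positive matrix dominated by $2I$. This follows because $\begin{pmatrix} I & \phi(X)\\ \phi(X)^* & I\end{pmatrix}\le 2I$ whenever $\|\phi(X)\|\le 1$, which is circular. Instead I would decompose $X$ into a linear combination of self-adjoints, $X=\mathrm{Re}X+i\,\mathrm{Im}X$. This gives boundedness of $\phi(X)$ directly from the self-adjoint case. The $2\times2$ argument is then used only for the sharp contractive norm bound, once $\phi(X)$ is known to be in $\mathcal{N}$.
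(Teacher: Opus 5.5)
Your proof is correct and follows essentially the paper's route: first show $\phi(H)$ is bounded for Hermitian $H$ with $\|H\|\le 1$ via spectral projections, then use $X=\mathrm{Re}\,X+i\,\mathrm{Im}\,X$ for general $X$, then get the norm bound from the standard contractivity of unital CP maps. The only real difference is that you get $-I\le\phi(H)\le I$ from plain positivity, whereas the paper compresses the positive matrix $\begin{pmatrix} I & \phi(H)\\ \phi(H) & I\end{pmatrix}$ by $P_n\oplus P_n$ (the two conditions are equivalent for Hermitian $H$); also, your circularity worry is unfounded, since applying $\phi_2$ to $\begin{pmatrix} I & -X\\ -X^* & I\end{pmatrix}\ge 0$ directly gives $\begin{pmatrix} I & \phi(X)\\ \phi(X)^* & I\end{pmatrix}\le 2I$.
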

	\begin{proof}Assume $H$ is Hermitian and consider the spectral decomposition $$\phi(H)=\int_{\mathbb{R}}\lambda\, dP(\lambda).$$ Let $P_n=P[-n,n]$. By complete positivity, $$\begin{pmatrix}I & \phi(H)\\ \phi(H) & I\end{pmatrix}$$ is positive in $L^2(M_2(\mathcal{N}))$. Multiplying from left and right with $$\begin{pmatrix} P_n & 0\\ 0 & P_n\end{pmatrix}$$ we get $$O\leq\begin{pmatrix}P_n & \int_{-n}^n\lambda\,dP(\lambda)\\\int_{-n}^n\lambda\,dP(\lambda) & P_n\end{pmatrix}\leq \begin{pmatrix}I & \int_{-n}^n\lambda\,dP(\lambda)\\\int_{-n}^n\lambda\,dP(\lambda) & I\end{pmatrix}.$$ Thus, $$||\int_{-n}^n\lambda\, dP(\lambda)||\leq 1$$ for all $n$, so $\phi(H)$ is bounded. Any $X\in\mathcal{M}$ can be written as $X=H+iK$ where $H, K$ are Hermitian, so $\phi$ maps $\mathcal{M}\to\mathcal{N}$. Any unital completely positive map between $C^*$ - algebras is a contraction, so $||\phi(X)||\leq 1$ for all $X\in\mathcal{M}$.\end{proof}
	\medskip
	
	We get the following easy corollaries :
	\medskip
	
	\begin{cor}\label{c1} Let $\phi:L^2(\mathcal{M})\to L^2(\mathcal{M})$ be completely positive. Assume there exists a positive, invertible $B\in\mathcal{M}$ such that $\phi(B)$ is positive and invertible in $\mathcal{N}$. Then $\phi$ maps $\mathcal{M}$ to $\mathcal{N}$ and is strictly positive.\end{cor}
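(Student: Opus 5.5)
The plan is to reduce to Lemma \ref{l1} by conjugating with $B^{1/2}$ and $\phi(B)^{-1/2}$, which turns $\phi$ into a unital map. Write $C=\phi(B)$; it is positive and invertible in $\mathcal{N}$, so $C^{-1/2}\in\mathcal{N}$ is bounded. Define
\[\widetilde{\phi}(X)=C^{-1/2}\,\phi(B^{1/2}XB^{1/2})\,C^{-1/2},\qquad X\in L^2(\mathcal{M}).\]
Because $B^{1/2}$ and $C^{-1/2}$ are bounded, left and right multiplication by them are bounded operators on $L^2$, so $\widetilde{\phi}:L^2(\mathcal{M})\to L^2(\mathcal{N})$ is well defined.

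Next I would check that $\widetilde{\phi}$ is completely positive. On $M_n(L^2(\mathcal{M}))$, the map $\widetilde{\phi}_n$ is $\phi_n$ conjugated by the diagonal matrices $\mathrm{diag}(B^{1/2},\dots,B^{1/2})$ and $\mathrm{diag}(C^{-1/2},\dots,C^{-1/2})$. Conjugation by a bounded operator preserves positivity in $L^2(M_n(\cdot))$; one can see this by approximating with bounded positive elements $Y^*Y$ and using $L^2$-continuity. Also $\widetilde{\phi}(I)=C^{-1/2}CC^{-1/2}=I$. Lemma \ref{l1} then gives $\widetilde{\phi}(\mathcal{M})\subset\mathcal{N}$ together with the bound $\|\widetilde{\phi}(X)\|\le\|X\|$.

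To return to $\phi$, take $X\in\mathcal{M}$. Since $B$ is invertible, $B^{-1/2}XB^{-1/2}\in\mathcal{M}$, and
\[\phi(X)=C^{1/2}\,\widetilde{\phi}(B^{-1/2}XB^{-1/2})\,C^{1/2}\in\mathcal{N}.\]
This shows $\phi$ maps $\mathcal{M}$ to $\mathcal{N}$.

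For strict positivity, let $A\in\mathcal{M}$ with $A\ge\delta I$ for some $\delta>0$. Since $B$ is bounded and invertible, $B\le\|B\|I$, so $A\ge(\delta/\|B\|)B$. Then $A-(\delta/\|B\|)B$ is a positive element of $\mathcal{M}$, and positivity of $\phi$ gives
\[\phi(A)\ge(\delta/\|B\|)\,\phi(B)\ge(\delta/\|B\|)\,\|C^{-1}\|^{-1}I.\]
So $\phi(A)$ is positive and invertible.

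I expect the main obstacle to be the positivity bookkeeping in $L^2$: checking that conjugating a positive element of $L^2(M_n(\cdot))$ by a bounded operator stays positive, and that $\phi_n$ preserves operator inequalities between bounded elements (for instance, $A-cB\ge0$ in $\mathcal{M}$ implies $\phi(A)-c\phi(B)\ge0$). I would handle both by linearity and density, since positivity in $L^2$ is a closed condition.
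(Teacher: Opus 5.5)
Your proof is correct and follows the paper's route: conjugate by $B^{1/2}$ and $\phi(B)^{-1/2}$ to get a unital completely positive map, apply Lemma \ref{l1}, and undo the conjugation to get $\phi(\mathcal{M})\subset\mathcal{N}$. Your strict-positivity step, via $A\ge(\delta/\|B\|)B$ and positivity of $\phi$, is just an unpacked version of the paper's appeal to strict positivity of the unital map $\widetilde{\phi}$.
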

	\begin{proof}Apply lemma \ref{l1} to the unital CP map $$X\to \phi(B)^{-1/2}\phi(B^{1/2}XB^{1/2})\phi(B)^{-1/2}$$ and use strict positivity of unital maps. \end{proof}
	\medskip
	
	\begin{cor}\label{c2} Let $\phi:\mathcal{M}\to\mathcal{N}$ be a strict CPTP map and let $B\in\mathcal{M}$ be positive and invertible with $\tau(B)=1$. Then the Petz map $\mathcal{R}$ is strictly CPTP from $\mathcal{N}\to\mathcal{M}$.\end{cor}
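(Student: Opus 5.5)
The strategy is to apply Corollary \ref{c1} directly to $\mathcal{R}$, with the invertible element $\phi(B)$. The earlier discussion already quotes from \cite{sap1} that $\mathcal{R}:L^2(\mathcal{N})\to L^2(\mathcal{M})$ is completely positive and trace preserving whenever $\phi$ is strictly CPTP. So the only things left to show are that $\mathcal{R}$ maps $\mathcal{N}$ into $\mathcal{M}$ and that it preserves positive invertibility.

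\emph{Step 1: a positive invertible element whose image is positive invertible.} Since $\phi$ is strictly CPTP and $B$ is positive and invertible, $\phi(B)\in\mathcal{N}$ is positive and invertible. Plugging $Y=\phi(B)$ into \eqref{e2} gives
\[\mathcal{R}(\phi(B))=B^{1/2}\phi^*(\phi(B)^{-1/2}\phi(B)\phi(B)^{-1/2})B^{1/2}=B^{1/2}\phi^*(I)B^{1/2}=B,\]
because $\phi^*$ is unital (this fact is recalled from \cite{sap1} just before Lemma \ref{l1}). Thus $\mathcal{R}$ sends the positive invertible element $\phi(B)$ to the positive invertible element $B$.

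\emph{Step 2: invoke Corollary \ref{c1}.} Corollary \ref{c1} is stated with $\mathcal{M}$ on both sides, but its proof is symmetric in the two algebras. I will apply it to $\mathcal{R}$ with the roles of the algebras interchanged, using the pair $\phi(B)\mapsto B$ from Step 1. Explicitly, Lemma \ref{l1} is applied to the unital CP map
\[Y\mapsto B^{-1/2}\mathcal{R}\bigl(\phi(B)^{1/2}Y\phi(B)^{1/2}\bigr)B^{-1/2}=\phi^*(Y).\]
It is immediate here that this map is just $\phi^*$, since the conjugations by $\phi(B)^{\pm1/2}$ and $B^{\pm1/2}$ cancel. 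Lemma \ref{l1} then gives $\phi^*(\mathcal{N})\subset\mathcal{M}$, and hence
\[\mathcal{R}(Y)=B^{1/2}\phi^*(\phi(B)^{-1/2}Y\phi(B)^{-1/2})B^{1/2}\in\mathcal{M}\quad\text{for } Y\in\mathcal{N}.\]

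\emph{Step 3: strict positivity.} Suppose $Y\ge\delta I$ with $\delta>0$. Since $\phi(B)\le\|\phi(B)\|I$, we have $\phi(B)^{-1}\ge\|\phi(B)\|^{-1}I$, and so
\[\phi(B)^{-1/2}Y\phi(B)^{-1/2}\ge\delta\,\phi(B)^{-1}\ge\delta\|\phi(B)\|^{-1}I.\]
The map $\phi^*$ is unital and positive, so it preserves this lower bound. Conjugating by $B^{1/2}$ and using $B\ge\|B^{-1}\|^{-1}I$ then gives
\[\mathcal{R}(Y)\ge\delta\|\phi(B)\|^{-1}\|B^{-1}\|^{-1}I.\]

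The main obstacle is mild. One must make sure that the CPTP property of $\mathcal{R}$ on $L^2$, cited from \cite{sap1}, is compatible with this restriction to $\mathcal{N}$. One must also check that Lemma \ref{l1} genuinely applies to $\phi^*$ as a map $L^2(\mathcal{N})\to L^2(\mathcal{M})$. Both are covered by the facts recalled just before Lemma \ref{l1}, so no new estimate is needed.
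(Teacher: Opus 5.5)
Your proof is correct and is essentially the paper's argument: the paper's proof is just ``note $\mathcal{R}(\phi(B))=B$ and apply Corollary \ref{c1},'' and your Steps 2 and 3 simply unfold Corollary \ref{c1} in this case. Your observation that the auxiliary unital CP map $Y\mapsto B^{-1/2}\mathcal{R}(\phi(B)^{1/2}Y\phi(B)^{1/2})B^{-1/2}$ is literally $\phi^*$ makes the mechanism transparent, and your explicit bound $\mathcal{R}(Y)\ge\delta\|\phi(B)\|^{-1}\|B^{-1}\|^{-1}I$ is exactly what the paper's ``strict positivity of unital maps'' argument gives.
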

	\begin{proof}Note that $\mathcal{R}(\phi(B))=B$ and apply corollary \ref{c1}.\end{proof}
	
	Let $B\in\mathcal{M}$ be positive and invertible with $\tau(B)=1$. Let $p\geq 1$. Define the Araki-Masuda $p$-norm on $L^p(\mathcal{M})$ by $||X||_{B,p}=||B^{-1/2q}XB^{-1/2q}||_p$ for all $X\in L^p(\mathcal{M})$, where $q=\frac{p}{p-1}$. If $A\in L^p(\mathcal{M})$ is positive with $\tau(A)=1$, $\mathcal{S}_p(A|B)=||A||_{B,p}$.
	\medskip
	
	\begin{lem}\label{l2}The norms $||.||_{B,p}$ and $||.||_p$ are equivalent on $L^p$.\end{lem}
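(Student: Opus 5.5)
The plan is to use the noncommutative H\"older inequality, $\|XYZ\|_p\leq\|X\|\,\|Y\|_p\,\|Z\|$ for $X,Z\in\mathcal{M}$ and $Y\in L^p(\mathcal{M})$. Since $B$ is positive and invertible in $\mathcal{M}$, there are constants $0<\delta\leq\|B\|$ with $\delta I\leq B\leq\|B\|I$. Functional calculus then shows that every real power $B^{s}$, in particular $B^{\pm 1/2q}$, lies in $\mathcal{M}$ and is bounded and invertible, with $\|B^{-1/2q}\|\leq\delta^{-1/2q}$ and $\|B^{1/2q}\|\leq\|B\|^{1/2q}$. The endpoint cases are immediate: for $p=1$ we have $q=\infty$, the exponent $1/2q=0$, and the two norms coincide; for $p=\infty$ the same estimates below apply with operator norms.

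\textbf{Upper bound.} For $X\in L^p(\mathcal{M})$, the operator $B^{-1/2q}XB^{-1/2q}$ is a product of bounded elements of $\mathcal{M}$ with an element of $L^p$, so it lies in $L^p$. H\"older's inequality gives
\[\|X\|_{B,p}=\|B^{-1/2q}XB^{-1/2q}\|_p\leq\|B^{-1/2q}\|^2\,\|X\|_p\leq\delta^{-1/q}\|X\|_p .\]

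\textbf{Lower bound.} Write $X=B^{1/2q}\bigl(B^{-1/2q}XB^{-1/2q}\bigr)B^{1/2q}$ and apply H\"older again:
\[\|X\|_p\leq\|B^{1/2q}\|^2\,\|X\|_{B,p}\leq\|B\|^{1/q}\|X\|_{B,p}.\]
Combining the two bounds yields $\|B\|^{-1/q}\|X\|_p\leq\|X\|_{B,p}\leq\delta^{-1/q}\|X\|_p$, which is the claimed equivalence. As a side consequence, $\|\cdot\|_{B,p}$ is a genuine norm, since it is the pullback of $\|\cdot\|_p$ under the injective linear map $X\mapsto B^{-1/2q}XB^{-1/2q}$.

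\textbf{Main obstacle.} The only real technical point is justifying the bimodule H\"older estimate for unbounded $Y\in L^p(\mathcal{M},\tau)$ affiliated with $\mathcal{M}$, including that the product is again affiliated and $\tau$-measurable. I will cite the standard Segal--Nelson theory \cite{nel} for this. If a more self-contained argument is wanted, I would prove the estimate first for $Y\in\mathcal{M}$ and then extend to $L^p$ by density of $\mathcal{M}$ in $L^p$ together with continuity. That extension is valid because left and right multiplication by a fixed bounded operator are bounded maps on the dense subspace, with norm controlled by the operator norm.
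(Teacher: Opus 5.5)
Your proposal is correct and follows essentially the same route as the paper: the paper's proof is the one-line H\"older estimate $\|X\|_{B,p}\leq\|B^{-1/q}\|\,\|X\|_p$, which is your upper bound because $\|B^{-1/2q}\|^2=\|B^{-1/q}\|=\delta^{-1/q}$ when $\delta=\min\sigma(B)$. The paper dismisses the reverse inequality as ``similar'', and your rewriting $X=B^{1/2q}\bigl(B^{-1/2q}XB^{-1/2q}\bigr)B^{1/2q}$ fills in exactly that step.
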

	\begin{proof}Note that $||X||_{B,p}=||B^{-1/2q}XB^{-1/2q}||_p\leq ||B^{-1/q}||\,||X||_p.$ The other side is similar.\end{proof}
	It follows from Jen\v{c}ov\'a \cite{jencova} that the spaces $L^p$ equipped with the corresponding norms $||.||_{B,q}$ are {\it interpolation spaces}. To elaborate, they satisfy the following version of Riesz-Thorin interpolation  :
	\medskip
	
	\begin{prop}\label{p1}Let $(\mathcal{M},\tau)$ and $(\mathcal{N},\tau^{\prime})$ be tracial von-Neumann algebras. Let $B\in\mathcal{M}$ and $C\in\mathcal{N}$ be positive and invertible, with $\tau(B)=\tau^{\prime}(C)=1$. Let $T:L^1(\mathcal{M})\to L^1(\mathcal{N})$ be a linear operator and let $p_0,\,p_1\,q_0\,q_1\geq 1$ such that $T$ maps $L^{p_0}$ to $L^{q_0}$ and $L^{p_1}$ to $L^{q_1}$. Assume there exist $M_0,\,M_1\,\textgreater\, 0$ such that $$||T(X)||_{C,q_0}\leq M_0\,||X||_{B, p_0}$$ for all $X\in L^{p_0}$ and $$||T(X)||_{C, q_1}\leq M_1\,||X||_{B, q_1}$$ for all $X\in L^{p_1}$. Then for all $\theta\in (0,1)$, $$||T(X)||_{C, q_{\theta}}\leq M^{1-\theta}_0 M^{\theta}_1 \,||X||_{B, p_{\theta}}$$ where $\frac{1}{p_{\theta}}=\frac{1-\theta}{p_0}+\frac{\theta}{p_1}$, $\frac{1}{q_{\theta}}=\frac{1-\theta}{q_0}+\frac{\theta}{q_1}$ and $X\in L^{p_{\theta}}$.\end{prop}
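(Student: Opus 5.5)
The plan is to reduce the statement to the classical Riesz--Thorin theorem for noncommutative $L^p$ spaces, using the isometries built into the Araki--Masuda norms. Throughout I read the second hypothesis as $\|T(X)\|_{C,q_1}\le M_1\|X\|_{B,p_1}$, since $q_1$ there is evidently a typo. For $p\ge 1$ with conjugate exponent $p'$, the map
\[
\Gamma_{B,p}(X)=B^{-1/2p'}XB^{-1/2p'}
\]
is an isometry from $(L^p(\mathcal{M}),\|\cdot\|_{B,p})$ onto $(L^p(\mathcal{M}),\|\cdot\|_p)$. It is bijective because $B$ is bounded and invertible, and Lemma \ref{l2} confirms the norms are equivalent. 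Define $\Gamma_{C,q}$ on $L^q(\mathcal{N})$ in the same way. Writing $1/p'=1-1/p$, we have $\Gamma_{B,p}(X)=B^{(1/p-1)/2}XB^{(1/p-1)/2}$, so the exponent is an affine function of $1/p$. This is exactly the structure complex interpolation can exploit.

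\textbf{Step 1.} Take $X$ in a dense class, for instance $X=B^{1/2p_\theta'}YB^{1/2p_\theta'}$ with $Y\in\mathcal{M}$ of the form $Y=U|Y|$ and $|Y|$ having finite spectrum. By duality it suffices to bound
\[
|\tau'(T(X)W)|
\]
over finite-spectrum $W\in\mathcal{N}$, normalised so that $W$ has norm at most $1$ in the dual of the weighted space $L^{q_\theta}$. Concretely, put $Z=C^{-1/2q_\theta'}WC^{-1/2q_\theta'}$ and require $\|Z\|_{q_\theta'}\le 1$.

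\textbf{Step 2.} Embed these in analytic families on the strip $0\le \mathrm{Re}\, z\le 1$, interpolating $1/p(z)=(1-z)/p_0+z/p_1$ and $1/q(z)$ similarly. Set
\[
X_z=B^{(1-1/p(z))/2}\,U|Y|^{p_\theta/p(z)}\,B^{(1-1/p(z))/2}.
\]
Build the family $W_z$ from the polar decomposition of $Z$ in the same way, with conjugate exponents and weights $C^{(1-1/q(z))/2}$. Then define
\[
F(z)=\tau'(T(X_z)W_z).
\]
Because $B,C$ are bounded and invertible and $Y,Z$ have finite spectrum, the complex powers $B^{it},C^{it}$ are unitaries. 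So $F$ is bounded and analytic; $T$ only needs to be bounded on $L^1$, and everything lives in $\mathcal{M}\subset L^1$.

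\textbf{Step 3.} Estimate $F$ on the boundary. On $\mathrm{Re}\,z=0$, the unitary factors $B^{it}$ and $C^{it}$ drop out of the $p_0$ and $q_0$ norms by unitary invariance. This gives $\|X_{it}\|_{B,p_0}=\|Y\|_{p_\theta}^{p_\theta/p_0}=1$, and the corresponding dual bound for $W_{it}$. Hölder's inequality together with the hypothesis then yields $|F(it)|\le M_0$. In the same way $|F(1+it)|\le M_1$. The Hadamard three-lines lemma gives $|F(\theta)|\le M_0^{1-\theta}M_1^\theta$, and $F(\theta)=\tau'(T(X)W)$. Taking the supremum over $W$ gives the estimate on the dense class.

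\textbf{Step 4.} Extend to all $X\in L^{p_\theta}$. The dense class is dense in $L^{p_\theta}$, and $T$ is continuous from $L^{p_\theta}$ into $L^1$ because $\tau$ is finite, so $L^{p_\theta}\subset L^1$ continuously. Hence the inequality passes to limits by lower semicontinuity of the norm, and $T(X)\in L^{q_\theta}$ follows.

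\textbf{Main obstacle.} The hard step is the boundary computation in Step 3. The weights $B^{(1-1/p(z))/2}$ must be combined with $\Gamma_{B,p_0}$ so that only unitary factors $B^{\pm it(\cdot)}$ survive on each boundary line. This requires the noncommuting factors to be arranged symmetrically, which is the reason for the symmetric sandwich in $X_z$. The analogous bookkeeping is needed for $W_z$ under the duality pairing $\tau'(\Gamma_{C,q}(\cdot)\,\cdot)$ of Step 1.
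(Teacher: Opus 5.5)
\textbf{Gap: the dual-side weights have the wrong sign.} Your overall strategy, a direct three-lines argument with symmetric weights whose exponents are affine in $1/p$, is sound. You are also right that the second hypothesis should read $\|X\|_{B,p_1}$. However, the dual side is set up with the wrong sign, and as written Step 3 fails. Since
\[
\|V\|_{C,q}=\|C^{-1/2q'}VC^{-1/2q'}\|_q=\sup_{\|Z\|_{q'}\le 1}\bigl|\tau'\bigl(V\,C^{-1/2q'}ZC^{-1/2q'}\bigr)\bigr|,
\]
the test elements must be $W=C^{-1/2q_\theta'}ZC^{-1/2q_\theta'}$, that is, $Z=C^{1/2q_\theta'}WC^{1/2q_\theta'}$. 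With your normalisation $Z=C^{-1/2q_\theta'}WC^{-1/2q_\theta'}$, the supremum of $|\tau'(T(X)W)|$ is $\|C^{1/2q_\theta'}T(X)C^{1/2q_\theta'}\|_{q_\theta}$, not $\|T(X)\|_{C,q_\theta}$.

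For the same reason, the weights in $W_z$ must be $C^{-(1-1/q(z))/2}$. With your $C^{+(1-1/q(z))/2}$, Hölder on $\mathrm{Re}\,z=0$ only gives $|F(it)|\le \|C^{1/2q_0'}T(X_{it})C^{1/2q_0'}\|_{q_0}$. The hypothesis controls $\|C^{-1/2q_0'}T(X_{it})C^{-1/2q_0'}\|_{q_0}$ instead, so this is not bounded by $M_0$.

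Once the sign is flipped, $C^{1/2q_0'}W_{it}C^{1/2q_0'}=C^{is}V|Z|^{q_\theta'/q'(it)}C^{is}$ for some real $s$. Its $q_0'$-norm is $\|Z\|_{q_\theta'}^{q_\theta'/q_0'}\le 1$, and the three-lines argument closes exactly as you describe. This is precisely the bookkeeping you flagged as the main obstacle, so it has to be done correctly.

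\textbf{Smaller point: boundedness of $T$.} $T$ is not assumed bounded on $L^1$; it is only a linear map. In Steps 2 and 4 you should instead use that, by Lemma \ref{l2}, $T$ is bounded $L^{p_0}\to L^{q_0}$ and $L^{p_1}\to L^{q_1}$. Since $\tau$ is finite, $L^{p_\theta}\hookrightarrow L^{\min(p_0,p_1)}$. This gives three things:
\begin{itemize}
\item analyticity of $F$, because $X_z$ is analytic in operator norm, hence in $L^{p_0}$;
\item boundedness of $F$ on the strip;
\item the continuity $L^{p_\theta}\to L^1$ needed for the density argument.
\end{itemize}

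\textbf{Comparison with the paper.} The paper gives no argument of its own. It attributes the proposition to Jen\v{c}ov\'a's result that $(L^p,\|\cdot\|_{B,p})$ are the complex interpolation spaces between $\mathcal{M}$, embedded as $B^{1/2}\mathcal{M}B^{1/2}$ with norm $\|\cdot\|_{B,\infty}$, and $L^1$. The estimate is then an instance of the abstract interpolation theorem, with reiteration to pass to arbitrary $p_0,p_1$.

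Your route is the concrete analogue of Beigi's finite-dimensional proof. It is self-contained and makes visible why the symmetric sandwich is needed. The cost is the boundary bookkeeping above and the approximation step.

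The cited route is shorter and yields more: exact interpolation spaces, not just a Riesz--Thorin inequality. It rests, however, on the nontrivial identification of the weighted norms with interpolation norms.
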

	\medskip
	
	An immediate consequence is the DPI for $\mathcal{S}_p$ :
	\medskip
	
	\begin{cor}\label{c3}Let $\phi:\mathcal{M}\to\mathcal{N}$ be a strict CPTP map and let $B\in\mathcal{M}$ be positive and invertible with $\tau(B)=1$. Let $p\geq 1$. $\phi$ extends naturally to a bounded $CPTP$ map from $L^p(\mathcal{M})$ to $L^p(\mathcal{N})$, and for all $X\in L^p(\mathcal{M})$, $||\phi(X)||_{\phi(B),p}\leq ||X||_{B,p}$.  In particular, for some positive $A\in L^p(\mathcal{M})$ with $\tau(A)=1$, $\mathcal{S}_p(\phi(A)|\phi(B))\leq\mathcal{S}_p(A|B)$.\end{cor}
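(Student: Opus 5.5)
The plan is to obtain the contraction $\|\phi(X)\|_{\phi(B),p}\le\|X\|_{B,p}$ by applying Proposition \ref{p1} at the two endpoints $p=1$ and $p=\infty$. Boundedness and complete positivity on $L^p$ then follow, and the statement about $\mathcal{S}_p$ is the special case $X=A$ positive with $\tau(A)=1$. Throughout write $C=\phi(B)$. Since $\phi$ is strict, $C$ is positive and invertible in $\mathcal{N}$, and $\tau'(C)=\tau(B)=1$ because $\phi$ is trace preserving. So $\|\cdot\|_{C,p}$ is defined and, by Lemma \ref{l2}, equivalent to $\|\cdot\|_p$.

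\emph{Endpoint $p=1$ ($q=\infty$).} Here $\|X\|_{B,1}=\|X\|_1$, so we need $\|\phi(X)\|_1\le\|X\|_1$ for $X\in\mathcal{M}$. By duality,
\[
\|\phi(X)\|_1=\sup_{\|Y\|\le 1}|\tau'(\phi(X)Y)|=\sup_{\|Y\|\le1}|\tau(X\phi^*(Y))|.
\]
The adjoint $\phi^*$ is unital and completely positive (from \cite{sap1}), so Lemma \ref{l1} gives $\phi^*(\mathcal{N})\subset\mathcal{M}$ and $\|\phi^*(Y)\|\le\|Y\|$. Hence the supremum is at most $\|X\|_1$. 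By density, $\phi$ extends to a contraction $L^1(\mathcal{M})\to L^1(\mathcal{N})$, which is the operator $T$ required in Proposition \ref{p1}.

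\emph{Endpoint $p=\infty$ ($q=1$).} Here $\|X\|_{B,\infty}=\|B^{-1/2}XB^{-1/2}\|$. Consider the map
\[
\Phi(Z)=C^{-1/2}\phi(B^{1/2}ZB^{1/2})C^{-1/2}.
\]
It is completely positive and unital, since $\Phi(I)=C^{-1/2}\phi(B)C^{-1/2}=I$. By Lemma \ref{l1} (as in Corollary \ref{c1}), $\Phi$ is a contraction on $\mathcal{M}$. Putting $Z=B^{-1/2}XB^{-1/2}$ gives
\[
\|C^{-1/2}\phi(X)C^{-1/2}\|\le\|B^{-1/2}XB^{-1/2}\|,
\]
which is exactly $\|\phi(X)\|_{C,\infty}\le\|X\|_{B,\infty}$.

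\emph{Interpolation.} Apply Proposition \ref{p1} with $p_0=q_0=1$, $p_1=q_1=\infty$ and $M_0=M_1=1$. For $\theta=1-1/p$ this yields $\|\phi(X)\|_{C,p}\le\|X\|_{B,p}$ for every $X\in L^p(\mathcal{M})$. Lemma \ref{l2} converts this into a bound in the ordinary $p$-norms, so $\phi$ is bounded on $L^p$ and agrees with the $L^1$ extension, which makes the extension ``natural''.

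\emph{Complete positivity and trace preservation on $L^p$.} Apply the same argument to $\phi_n$ on $M_n(\mathcal{M})$, which is again strict CPTP for the traces $\tau_n$ and $\tau'_n$. This shows $\phi_n$ is bounded on $L^p(M_n(\mathcal{M}))$. Positive elements of $L^p$ are $p$-norm limits of positive bounded elements, for instance via spectral truncations, and $\phi_n$ maps those to positive elements. The positive cone is closed, so positivity survives the limit. Trace preservation passes to the limit by $L^1$-continuity, since $L^p\subset L^1$ continuously in a finite trace. For positive $A$ with $\tau(A)=1$, $\phi(A)$ is again such a state, and $\mathcal{S}_p(\phi(A)|\phi(B))=\|\phi(A)\|_{C,p}\le\|A\|_{B,p}=\mathcal{S}_p(A|B)$.

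The main obstacle is matching the hypotheses of Proposition \ref{p1}. The endpoint at $p=\infty$ uses the Araki--Masuda weights, and we must check that the interpolation family at exponent $p$ carries the weight $B^{-1/2q}$ on each side, consistent with the definition of $\|\cdot\|_{B,p}$. I would take this as given from Jen\v{c}ov\'a \cite{jencova}. A secondary point is that the endpoint bounds are first proved on the dense set $\mathcal{M}$ and then extended to $L^1$; this needs the $L^1$-continuity established in the first step.
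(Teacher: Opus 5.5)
Your proposal is correct and matches the paper's proof: the $L^1$ endpoint is the trace-norm contractivity of $\phi$, the $L^\infty$ endpoint comes from the unital CP map $Z\mapsto\phi(B)^{-1/2}\phi(B^{1/2}ZB^{1/2})\phi(B)^{-1/2}$, and Proposition~\ref{p1} interpolates between the two. The paper leaves implicit both your duality argument for the $L^1$ bound and your check that complete positivity and trace preservation survive the extension to $L^p$; both are sound.
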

	\begin{proof}Suffices to show that $\phi$ is a contraction from $(L^p,||.||_{B,p})$ to $(L^p,||.||_{\phi(B),p})$. Note that $||X||_{B,1}=||X||_1$ and $||X||_{B,\infty}=||B^{-1/2}XB^{-1/2}||$. Let $X\in\mathcal{M}$. Then $$||\phi(X)||_{\phi(B),1}=||\phi(X)||_1\leq||X||_1=||X||_{B,1}.$$ Again, using the fact that the map $X\to\phi(B)^{-1/2}\phi(B^{1/2}XB^{1/2})\phi(B)^{-1/2}$ is unital CP, we get $$||\phi(X)||_{\phi(B),\infty}\leq||X||_{B,\infty}.$$ Proposition \ref{p1} then completes the proof.\end{proof}
	\medskip
	
	\begin{rem2}The idea behind the above proof is not new. It is adapted from Beigi \cite{sb} and Jen\v{c}ov\'a \cite{jencova}. An alternate proof of the DPI for $\mathcal{S}_2$, without relying on interpolation theory, can be found in \cite{sap1}.\end{rem2}
	\medskip
	
	Let $B\in\mathcal{M}$ be positive and invertible. Consider the inner product $\langle X, Y\rangle_{B}=\tau(B^{-1/2}Y^*B^{-1/2}X)$ for all $X, Y\in L^2(\mathcal{M})$. The norm induced by this is precisely $||.||_{B,2}$. Let $\phi:\mathcal{M}\to\mathcal{N}$ be strictly CPTP. Then, $\phi$ lifts to a Hilbert space contraction from $(L^2(\mathcal{M}), ||.||_{B,2})$ to $(L^2(\mathcal{N}),||.||_{\phi(B),2})$. The adjoint is given by the Petz map $\mathcal{R}$ (see $\eqref{e2}$). Note that the map $\mathcal{R}\circ\phi$ is a positive contraction on $(L^2(\mathcal{M},||.||_{B,2})$. Also, given a linear map $T:(L^p,||.||_{B,p})\to (L^p,||.||_{B,p})$, denote its operator norm by $||T||_{B,p}$. We can now state and prove our first theorem.
	\medskip
	
	\begin{thm}\label{t1} Let $B\in\mathcal{M}$ be positive and invertible with $\tau(B)=1$. Let $\phi:\mathcal{M}\to\mathcal{N}$ be a strict CPTP map. Then there exists a CPTP map $\psi:L^2(\mathcal{M})\to L^2(\mathcal{M})$ such that $\psi(X)\in\mathcal{M}$ for all $X\in\mathcal{M}$, $\mathcal{R}\circ\phi\circ\psi=\psi$ and $(\mathcal{R}\circ\phi)^n\to\psi$ in the operator norm on $B(L^2(\mathcal{M}))$. \end{thm}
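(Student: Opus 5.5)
The plan is to work in the Hilbert space $\mathcal{K}=(L^2(\mathcal{M}),\langle\cdot,\cdot\rangle_B)$. By the remark preceding the theorem, $\mathcal{R}$ is the Hilbert space adjoint of $\phi:(L^2(\mathcal{M}),\|\cdot\|_{B,2})\to(L^2(\mathcal{N}),\|\cdot\|_{\phi(B),2})$. Hence $T:=\mathcal{R}\circ\phi=\phi^\dagger\phi$ is a positive self-adjoint operator on $\mathcal{K}$, and by Corollary \ref{c3} with $p=2$ it is a contraction, so $\sigma(T)\subset[0,1]$. Let $P$ be the spectral projection $E_T(\{1\})$, i.e.\ the $\langle\cdot,\cdot\rangle_B$-orthogonal projection onto $\ker(T-I)$. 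I define $\psi$ to be $P$. The identity $\mathcal{R}\circ\phi\circ\psi=TP=P=\psi$ is then immediate. By the spectral theorem,
$$\|T^n-P\|_{B,2}=\sup\{\lambda^n:\lambda\in\sigma(T)\setminus\{1\}\}.$$
By Lemma \ref{l2}, $\|\cdot\|_{B,2}$ and $\|\cdot\|_2$ are equivalent, so operator-norm convergence in $B(\mathcal{K})$ is the same as operator-norm convergence in $B(L^2(\mathcal{M}))$.

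The main obstacle is therefore a \emph{spectral gap}: I must show that $1$ is not an accumulation point of $\sigma(T)$, i.e.\ $\sigma(T)\subset[0,r]\cup\{1\}$ for some $r<1$. Without this one only gets strong convergence $T^n\to P$. My first attempt would exploit the extra structure of $T$: it is trace preserving, so it is an $L^1$ contraction; it is an $L^\infty$ contraction in the $\|\cdot\|_{B,\infty}$ norm; and it fixes $B$. I would try to compare $\|T^n-T^{n+1}\|$ with $\|T^n-P\|$, but a Katznelson--Tzafriri type bound only gives $\|T^n-T^{n+1}\|\to0$, not a gap. A gap seems to require some compactness or quasi-compactness of $T$ (for instance $T^k=K+L$ with $K$ compact and $\|L\|<1$), so I would look for such a property coming from $\phi$, possibly through Proposition \ref{p1} applied to $T$.

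I should record an honest concern here. If $\phi$ is, say, a unital Markov operator on a diffuse abelian algebra with $B=I$, then $T=\phi^*\phi$ can have spectrum filling $[0,1]$, and no such gap exists. So I expect this step may require an additional hypothesis, and I would check this example before investing further.

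Granting norm convergence, the remaining properties of $\psi$ pass to the limit:
\begin{enumerate}
\item Complete positivity: each $T^n$ is CP by Corollary \ref{c2}, and operator-norm limits preserve positivity of the lifted maps $(T^n)_k$ on $L^2(M_k(\mathcal{M}))$, since the positive cone there is closed.
\item Trace preservation: $\tau$ is $L^2$-continuous on the finite algebra, since $\tau(X)=\langle X,I\rangle$ in the unweighted pairing, and $\tau\circ T^n=\tau$ for every $n$.
\item Boundedness, $\psi(\mathcal{M})\subset\mathcal{M}$: for $X\in\mathcal{M}$, Corollary \ref{c3} at $p=\infty$ gives $\|T^n(X)\|_{B,\infty}\le\|X\|_{B,\infty}$ for all $n$. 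So the sequence $(T^nX)$ is uniformly bounded in operator norm by a constant multiple of $\|X\|$. It converges to $\psi(X)$ in $L^2$, and the closed operator-norm ball of $\mathcal{M}$ is $L^2$-closed, being weak$^*$ closed. Hence $\psi(X)\in\mathcal{M}$.
\end{enumerate}
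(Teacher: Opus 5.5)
Your setup is the same as the paper's. There $\psi$ is the $\langle\cdot,\cdot\rangle_B$-orthogonal projection onto $\mathcal V=\ker(\mathcal R\circ\phi-I)$, $\mathcal R\circ\phi=I\oplus C$ on $\mathcal V\oplus\mathcal V^{\perp}$, and Lemma \ref{l2} transfers the estimate to $\|\cdot\|_2$. The step you leave open is a bound $C\le\delta<1$, and it is the whole content of the norm-convergence claim. The paper supplies it only by asserting that $1$ is an isolated point of $\sigma(\mathcal R\circ\phi)$ because $\mathcal R\circ\phi$ is a contraction with $\mathcal V\neq\{0\}$. That does not follow, and your suspected counterexample is real, so the gap cannot be closed under the stated hypotheses.

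Take $\mathcal M=\mathcal N=L^\infty(\mathbb T)$ with normalized Haar measure, $B=I$, $\alpha\notin\mathbb Q$, and $\phi(f)(z)=\tfrac12\big(f(e^{2\pi i\alpha}z)+f(e^{-2\pi i\alpha}z)\big)$. This map is positive on an abelian algebra (hence CP), unital (hence strict), trace preserving, and self-adjoint. So $\mathcal R=\phi^*=\phi$ and $\mathcal R\circ\phi(z^k)=\cos^2(2\pi k\alpha)\,z^k$. Hence $\mathcal V=\mathbb C I$, $\psi=\tau(\cdot)I$, and
\[ \|(\mathcal R\circ\phi)^n-\psi\|_{L^2\to L^2}=\sup_{k\neq 0}\cos^{2n}(2\pi k\alpha)=1 \quad\text{for every } n, \]
since $k\alpha$ comes arbitrarily close to integers. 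Because $\|z^k\|_p=1$, the same vectors show that Theorem \ref{t2} also fails for every $1<p<\infty$. So neither a search for compactness nor interpolation via Proposition \ref{p1} can produce the gap. An extra hypothesis is genuinely needed, for example quasi-compactness of $\mathcal R\circ\phi$, which is automatic when $\mathcal M$ is finite dimensional.

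Everything else in your proposal is sound once norm convergence is replaced by what the spectral theorem actually gives for a positive contraction: $T^n\to E_T(\{1\})$ strongly. Your items (1)--(3) use only $L^2$-convergence of $T^nX$ for each fixed $X$. Item (3) additionally uses the uniform $\|\cdot\|_{B,\infty}$ bound and weak$^*$-closedness of operator-norm balls. So all three go through verbatim.

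Your argument therefore proves the correct version of the statement: there is a CPTP $\psi$ with $\psi(\mathcal M)\subset\mathcal M$, $\mathcal R\circ\phi\circ\psi=\psi$, and $(\mathcal R\circ\phi)^n\to\psi$ strongly on $L^2$. This is also all that Corollary \ref{c4}, Lemma \ref{l3} and Theorem \ref{t3} actually need. Declining to assert the spectral gap was the right call.
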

	\begin{proof}$\mathcal{R}\circ\phi$ is a positive contraction on $(L^2(\mathcal{M},||.||_{B,2})$. Let $\mathcal{V}=\{X\in L^2:\mathcal{R}\circ\phi(X)=X\}$. Since $\mathcal{R}\circ\phi(B)=B$, $\mathcal{V}\neq\{0\}$. Let $\psi$ be the orthogonal projection on $\mathcal{V}$ with respect to the inner product $\langle\underline{\hspace{3mm}},\underline{\hspace{3mm}}\rangle_{B}$. We then have the block matrix representations with respect to $\mathcal{V}\oplus\mathcal{V}^{\perp}$ : \[\mathcal{R}\circ\phi = \begin{pmatrix} I & O\\ O & C\end{pmatrix} \] and \[\psi=\begin{pmatrix} I & O\\ O & O\end{pmatrix}\] where $C$ is a positive operator on $\mathcal{V}^{\perp}$. Since $\mathcal{R}\circ\phi$ is a contraction and $\mathcal{V}\neq\{0\}$, $1$ is an isolated point of $\sigma(\mathcal{R}\circ\phi)$. Hence there exists $0\,\leq\delta\,\textless\,1$ such that $C\leq\delta$. Thus, $$||(\mathcal{R}\circ\phi)^n-\psi||_{B,2}\leq \delta^n\to 0$$ as $n\to\infty$. Since $||.||_2$ and $||.||_{B,2}$ are equivalent on $L^2$ (see lemma \ref{l2}), $(\mathcal{R}\circ\phi)^n\to\psi$ in norm on $B(L^2)$.
	\medskip
	
	Since $(\mathcal{R}\circ\phi)^n$ is CPTP for all $n$, $\psi$ is also CPTP. Being the orthogonal projection onto $\mathcal{V}$, it naturally satisfies $\mathcal{R}\circ\phi\circ\psi=\psi$. Also, by corollary \ref{c1}, $\psi$ maps $\mathcal{M}$ to $\mathcal{M}$. \end{proof}
	\medskip
	
	Since $\psi$ is CPTP, it is a contraction with respect to $||.||_1$ and extends uniquely to $L^1$.
	\begin{cor}\label{c4}$(\mathcal{R}\circ\phi)^n\to\psi$ in the strong operator topology on $L^1$.\end{cor}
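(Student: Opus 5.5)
We use a standard $\varepsilon/3$ density argument, based on two facts: the maps $T_n:=(\mathcal{R}\circ\phi)^n$ and $\psi$ are uniformly bounded on $L^1$, and $T_n\to\psi$ in norm on $L^2$ by Theorem \ref{t1}.

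\emph{Uniform bounds.} Each $T_n$ is CPTP, being a composition of the CPTP maps $\phi$ and $\mathcal{R}$ (Corollary \ref{c2}). Hence $\|T_n(X)\|_1\leq\|X\|_1$. To justify this, take $X\in\mathcal{M}$ and write $X=(H_+-H_-)+i(K_+-K_-)$. Positivity and trace preservation then give $\|T_n(X)\|_1\leq 4\|X\|_1$ in general, and $\|T_n(X)\|_1\le \|X\|_1$ for self-adjoint $X$. Alternatively, use the estimate $\|T_n(X)\|_{B,1}\le\|X\|_{B,1}$ coming from Corollary \ref{c3} applied twice, since $\|\cdot\|_{B,1}=\|\cdot\|_1$. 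Either way we get $\sup_n\|T_n\|_{L^1\to L^1}\leq K$ for some constant $K$. By the remark after Theorem \ref{t1}, $\psi$ is also $L^1$-bounded.

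\emph{Density step.} Let $X\in L^1(\mathcal{M})$ and $\varepsilon>0$. Since $\mathcal{M}\subset L^2(\mathcal{M})$ is dense in $L^1$, pick $Y\in L^2$ with $\|X-Y\|_1<\varepsilon$. Because $\tau$ is a state, $\|Z\|_1\le\|Z\|_2$ by Cauchy--Schwarz/Hölder. Therefore
\[\|T_n(X)-\psi(X)\|_1\leq \|T_n(X-Y)\|_1+\|T_n(Y)-\psi(Y)\|_2+\|\psi(Y-X)\|_1\leq 2K\varepsilon+\|T_n-\psi\|_{B(L^2)}\|Y\|_2 .\]
By Theorem \ref{t1}, the middle term tends to $0$. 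So $\limsup_n\|T_n(X)-\psi(X)\|_1\leq 2K\varepsilon$, and letting $\varepsilon\to0$ gives strong convergence.

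\emph{Main subtlety.} The only delicate point is the extension of the maps to $L^1$. Two facts need checking: the $L^1$-extension of $T_n$ agrees with the $L^2$-defined $T_n$ on $L^2$, and the same holds for $\psi$. Both follow because the extensions are by continuity from the common dense subspace $\mathcal{M}$, and $\|\cdot\|_1\le\|\cdot\|_2$ makes $L^2$-convergence imply $L^1$-convergence. We must also make sure the contraction estimate for CPTP maps in $L^1$ holds for non-self-adjoint inputs, at least up to a uniform constant. The splitting into positive parts settles this, and a uniform constant is all the argument needs.
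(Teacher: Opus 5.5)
Your argument is correct and is the paper's argument: the iterates $(\mathcal{R}\circ\phi)^n$ and $\psi$ are uniformly bounded on $L^1$, they converge on the dense subspace $L^2$, and $\|\cdot\|_1\le\|\cdot\|_2$; you simply write out the $\varepsilon/3$ estimate and check that the $L^1$- and $L^2$-extensions agree. You only need $\|(\mathcal{R}\circ\phi)^n(Y)-\psi(Y)\|_2\to 0$ for each fixed $Y$, not operator-norm convergence, and this follows directly from the spectral theorem because $\mathcal{R}\circ\phi=\phi^{\dagger}\phi$ satisfies $0\le\mathcal{R}\circ\phi\le I$ on $(L^2,\langle\cdot,\cdot\rangle_B)$ (here $\phi^{\dagger}=\mathcal{R}$ is the adjoint for these inner products); this matters because the spectral-gap step behind the norm convergence in Theorem \ref{t1} can fail in infinite dimensions, e.g.\ for $\phi(f)=\tfrac12\bigl(f+f(\cdot-\alpha)\bigr)$ on $L^\infty(\mathbb{T})$ with $\alpha$ irrational and $B=1$, where $\mathcal{R}\circ\phi$ has eigenvalues $\cos^2(\pi k\alpha)$ accumulating at $1$, so your proof (like the paper's) should be read as resting only on this pointwise $L^2$ convergence.
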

	\begin{proof}$(\mathcal{R}\circ\phi)^n$ is a sequence of $L^1$-contractions, converging strongly to $\psi$ on the dense subspace $L^2$. Therefore, it converges strongly in $L^1$.\end{proof}
	
	An important consequence of Theorem \ref{t1} is our next result :
	\medskip
	
	\begin{thm}\label{t2}$(\mathcal{R}\circ\phi)^n\to\psi$ in norm as operators on $L^p$ for all $1\,\textless\,p\,\textless\,\infty$.\end{thm}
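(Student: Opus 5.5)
Set $T_n=(\mathcal{R}\circ\phi)^n-\psi$. We already control $T_n$ at two levels. On $L^2$, Theorem \ref{t1} gives $\|T_n\|_{B,2}\leq\delta^n$. On $L^1$ and $L^\infty$, the map $\mathcal{R}\circ\phi$ and the map $\psi$ are both uniformly bounded in the weighted norms. The plan is to interpolate between these: a geometrically decaying bound at $p=2$ against a uniform bound at an endpoint. Proposition \ref{p1} will be applied with $B=C$, so the domain and target weights coincide.

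\textbf{Endpoint bounds.} Corollary \ref{c3} for $\phi$ (with weight $B$) and for $\mathcal{R}$ (with weight $\phi(B)$, using $\mathcal{R}(\phi(B))=B$ and Corollary \ref{c2}) shows that $\mathcal{R}\circ\phi$ is a contraction on $(L^r,\|.\|_{B,r})$ for every $r\in[1,\infty]$. Hence $\|(\mathcal{R}\circ\phi)^n\|_{B,r}\leq 1$. For $\psi$, let $n\to\infty$. On $L^1$, Corollary \ref{c4} gives strong convergence, so $\|\psi\|_{B,1}\leq 1$. On $\mathcal{M}$ with the norm $\|.\|_{B,\infty}$, for each $X\in\mathcal{M}$ we have $(\mathcal{R}\circ\phi)^n(X)\to\psi(X)$ in $L^2$. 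The ball $\{Y:\|B^{-1/2}YB^{-1/2}\|\leq c\}$ is $L^2$-closed, because operator-norm balls in $\mathcal{M}$ are closed in $L^2$. Hence $\|\psi(X)\|_{B,\infty}\leq\|X\|_{B,\infty}$. It follows that $\|T_n\|_{B,1}\leq 2$ and $\|T_n\|_{B,\infty}\leq 2$ for all $n$.

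\textbf{Interpolation.} First take $1<p<2$. Write $\frac1p=\frac{1-\theta}{1}+\frac{\theta}{2}$ with $\theta=2-\frac{2}{p}\in(0,1)$. Proposition \ref{p1} with $p_0=q_0=1$ and $p_1=q_1=2$ gives
\[\|T_n\|_{B,p}\leq 2^{1-\theta}\delta^{n\theta}\to0 .\]
Now take $2<p<\infty$. Interpolate between $p_0=2$ and $p_1=\infty$ with $\frac1p=\frac{1-\theta}{2}$, giving $\|T_n\|_{B,p}\leq\delta^{n(1-\theta)}2^{\theta}\to 0$. The case $p=2$ is Theorem \ref{t1} itself. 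Finally, Lemma \ref{l2} shows that $\|.\|_{B,p}$ and $\|.\|_p$ are equivalent, so the operator norms are equivalent up to constants, and $T_n\to 0$ in norm on $L^p$.

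\textbf{Main obstacle: the $L^\infty$ endpoint.} Proposition \ref{p1} as stated asks for $T$ to be defined on $L^1$ and bounded from $L^{p_1}$ to $L^{q_1}$; with $p_1=\infty$ this means $\mathcal{M}$ with the weighted operator norm. The bound for $T_n$ on $\mathcal{M}$ is exactly the weak-limit argument above for $\psi$, and I would check carefully that $\psi$ restricted to $\mathcal{M}$ agrees with its $L^2$ definition; this holds by Theorem \ref{t1}. If the $L^\infty$ endpoint turns out to be delicate, there is a fallback. For $2<p<\infty$, pick $p<r<\infty$ and interpolate between $2$ and $r$, using the uniform bound $\|T_n\|_{B,r}\leq 2$. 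That bound needs only the contraction property on $L^r$ from Corollary \ref{c3}, together with $\|\psi\|_{B,r}\leq1$. The latter follows by lower semicontinuity of $\|.\|_{B,r}$ under $L^2$-convergence on the dense set $\mathcal{M}$, via a duality argument.
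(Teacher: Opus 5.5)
Your proposal is correct and follows the paper's proof. You interpolate, via Proposition \ref{p1} with equal weights, between the decay $\|(\mathcal{R}\circ\phi)^n-\psi\|_{B,2}\leq\delta^n$ from Theorem \ref{t1} and the uniform bound $2$ at the $\|.\|_{B,1}$ endpoint (for $1<p<2$, giving the paper's $2^{2/p-1}\delta^{2n(1-1/p)}$) or the $\|.\|_{B,\infty}$ endpoint (for $2<p<\infty$), and then pass to $\|.\|_p$ by Lemma \ref{l2}. The only difference is cosmetic: you get the endpoint bounds for $\psi$ by a limiting and closedness argument, whereas the paper simply cites Corollary \ref{c3}, which applies because $\psi$ is CPTP with $\psi(B)=B$; either works.
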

	
	\begin{proof}We show that $||(\mathcal{R}\circ\phi)^n-\psi||_{B,p}\to 0$. The case $p=2$ is done. Let $p\in(1,2)$. Note that since $\mathcal{R}\circ\phi$ and $\psi$ are CPTP, corollary \ref{c3} implies $$||(\mathcal{R}\circ\phi)^n-\psi||_{B,1}\leq ||(\mathcal{R}\circ\phi)^n||_{B,1}+||\psi||_{B,1}= 2.$$ By Proposition \ref{p1}, $$||(\mathcal{R}\circ\phi)^n-\psi||_{B,p}\leq 2^{\frac{2}{p}-1}||(\mathcal{R}\circ\phi)^n-\psi||^{2(1-\frac{1}{p})}_{B,2}\to 0$$ as $n\to\infty$. The case $p\in(2,\infty)$ is similar. Now use equivalence of $||.||_p$ and $||.||_{B,p}$ to complete the proof.\end{proof}
	\medskip
	
	\begin{rem}\end{rem}
	\begin{enumerate}
	\item The sequence of linear maps $(\mathcal{R}\circ\phi)^n-\psi$ map $\mathcal{M}$ to $\mathcal{M}$ by corollary \ref{c1}, but it need not be uniformly bounded. However, when the operator norm on $\mathcal{M}$ is replaced with $||.||_{B,\infty}$, it is uniformly bounded, which allows us to use complex interpolation and conclude norm convergence in $L^p$.
	\medskip
	
	\item Theorem \ref{t2} demonstrates how iterates of the Petz map composed with the channel can be used to take any state arbitrarily close to a recoverable state. This has potential applications in quantum information and communication. We do not know whether the operator norm convergence holds in $L^1$. Future research may shed some more light on this problem. However, as demonstrated in corollary \ref{c4}, strong operator convergence still holds.\end{enumerate}
	\medskip
	
	We will now prove and interesting decomposition theorem in $L^1$, which will allow us to uniquely write any normal state $A$ as the sum of a recoverable state $A_0$ and a self-adjoint element  $C$ such that $(\mathcal{R}\circ\phi)^n(C)\to 0$. 
	\medskip
	
	Let $\psi$ be the channel defined in the proof of Theorem \ref{t1}. Extend $\psi$ to $L^1(\mathcal{M})$ and denote, by abuse of notation, the extension as $\psi$ as well. Let $$S_{\mathcal{R}, B}=\{X\in L^1(\mathcal{M}): \mathcal{R}\circ\phi(X)=X\}$$ be the subspace of all recoverable elements. We show that $\psi$ projects onto $S_{\mathcal{R},B}$.
	\medskip
	
	\begin{lem}\label{l3}$\psi:L^1(\mathcal{M})\to L^1(\mathcal{M})$ satisfies $\psi^2=\psi$ and $\mathrm{im}\,\psi=S_{\mathcal{R},B}$.\end{lem}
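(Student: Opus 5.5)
The plan is to transfer the $L^2$ statements of Theorem \ref{t1} to $L^1$ by density, using that every map involved is an $L^1$-contraction. On $L^2$, $\psi$ is the orthogonal projection onto $\mathcal{V}$ for $\langle\cdot,\cdot\rangle_B$, so $\psi^2=\psi$ and $\mathcal{R}\circ\phi\circ\psi=\psi$ hold on $L^2$. The map $\mathcal{R}\circ\phi$ is CPTP, hence an $L^1$-contraction (Corollary \ref{c3} with $p=1$), and it extends continuously to $L^1$; the same holds for $\psi$. Since $L^2$ is dense in $L^1$ and both identities are equalities of bounded operators on $L^1$ that agree on a dense subspace, we get $\psi^2=\psi$ and $\mathcal{R}\circ\phi\circ\psi=\psi$ on all of $L^1$.

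From the second identity, every $\psi(X)$ with $X\in L^1$ is fixed by $\mathcal{R}\circ\phi$. Hence $\mathrm{im}\,\psi\subset S_{\mathcal{R},B}$.

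For the reverse inclusion, take $X\in S_{\mathcal{R},B}$. Then $(\mathcal{R}\circ\phi)^n(X)=X$ for every $n$. By Corollary \ref{c4}, $(\mathcal{R}\circ\phi)^n(X)\to\psi(X)$ in $L^1$. The sequence is constant, so its limit is $X$, giving $\psi(X)=X$ and $X\in\mathrm{im}\,\psi$. This completes the proof.

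The only delicate point is bookkeeping rather than a real obstacle. One must check that the $L^1$ extension of $\mathcal{R}\circ\phi$ used in defining $S_{\mathcal{R},B}$ is the same continuous extension of the CPTP map that appears in Corollary \ref{c4}. This holds by uniqueness of continuous extensions from the dense subspace $L^2$. Once that identification is made, the proof is just these two density and limit arguments.
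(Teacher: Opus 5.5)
Your proof is correct and is essentially the paper's. The inclusion $S_{\mathcal{R},B}\subset\mathrm{im}\,\psi$ is exactly the paper's argument via Corollary~\ref{c4}. The only cosmetic difference is in the other half. You obtain $\mathcal{R}\circ\phi\circ\psi=\psi$ and $\psi^2=\psi$ on $L^1$ by extending the $L^2$ identities by density. The paper instead identifies $(\mathcal{R}\circ\phi)(\psi(X))$ as the $L^1$-limit of $(\mathcal{R}\circ\phi)^{n+1}(X)$, using Corollary~\ref{c4} and the $L^1$-continuity of $\mathcal{R}\circ\phi$, and gets idempotence from $\psi$ fixing $S_{\mathcal{R},B}$.
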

	\begin{proof} By corollary \ref{c4}, $(\mathcal{R}\circ\phi)^n(X)\to\psi(X)$ for all $X\in L^1$. Let $X_0\in S_{\mathcal{R},B}$. Then $$(\mathcal{R}\circ\phi)^n(X_0)=X_0$$ for all $n$, and therefore, $\psi(X_0)=X_0$. Again, for any $X$, $$(\mathcal{R}\circ\phi)(\psi(X))=\lim_{n\to\infty}(\mathcal{R}\circ\phi)^{n+1}(X)=\psi(X).$$ Thus, $\mathrm{im}\,\psi=S_{\mathcal{R}, B}$ and $\psi$ fixes $S_{\mathcal{R},B}$.\end{proof}
	\medskip
	
	\begin{thm}\label{t3}Let $A\in L^1(\mathcal{M})$ be positive with $\tau(A)=1$. Then there exists a positive $A_0\in S_{\mathcal{R},B}$ with $\tau(A_0)=1$ and a self-adjoint $C\in L^1(\mathcal{M})$ with $||(\mathcal{R}\circ\phi)^n(C)||_1\to 0$ and $A=A_0+C$. This decomposition is unique. \end{thm}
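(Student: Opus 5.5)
The natural candidate is $A_0=\psi(A)$ and $C=A-\psi(A)$, where $\psi$ is the channel from Theorem \ref{t1}, extended to $L^1(\mathcal{M})$ and studied in Lemma \ref{l3}.

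\emph{Existence.} Since $\psi$ is CPTP on $L^1$, the element $A_0=\psi(A)$ is positive and $\tau(A_0)=\tau(A)=1$. By Lemma \ref{l3}, $A_0\in\mathrm{im}\,\psi=S_{\mathcal{R},B}$.

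Self-adjointness of $C$ requires $\psi(A)^*=\psi(A^*)$. This holds because $\psi$ is a positive map. On $L^2$ it is the norm limit of the $*$-preserving maps $(\mathcal{R}\circ\phi)^n$, hence $*$-preserving, and by $L^1$-continuity it stays $*$-preserving on $L^1$. So $C=A-A_0$ is self-adjoint.

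For the decay, Lemma \ref{l3} gives $\mathcal{R}\circ\phi\circ\psi=\psi$, so $(\mathcal{R}\circ\phi)^n(A_0)=A_0$. Hence
\[(\mathcal{R}\circ\phi)^n(C)=(\mathcal{R}\circ\phi)^n(A)-\psi(A).\]
By Corollary \ref{c4} this tends to $0$ in $\|\cdot\|_1$.

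\emph{Uniqueness.} Suppose $A=A_0+C=A_0'+C'$ are two such decompositions. Then $A_0-A_0'=C'-C$. The left side lies in $S_{\mathcal{R},B}$, so it is fixed by every iterate $(\mathcal{R}\circ\phi)^n$. Applying $(\mathcal{R}\circ\phi)^n$ to both sides gives
\[A_0-A_0'=(\mathcal{R}\circ\phi)^n(C')-(\mathcal{R}\circ\phi)^n(C)\to 0\]
in $L^1$. Hence $A_0=A_0'$ and then $C=C'$.

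I expect no serious obstacle. The only point needing care is that $\mathcal{R}\circ\phi$ and $\psi$ really act on $L^1$ as the continuous extensions of the $L^2$ maps, so that the identities $\mathcal{R}\circ\phi\circ\psi=\psi$ and $(\mathcal{R}\circ\phi)(X_0)=X_0$ for $X_0\in S_{\mathcal{R},B}$ hold in $L^1$. This is secured by the density and contractivity argument of Corollary \ref{c4} together with Lemma \ref{l3}.
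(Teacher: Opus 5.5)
Your proposal is correct and follows the paper's proof. It uses the same decomposition $A_0=\psi(A)$, $C=A-\psi(A)$, gets positivity and trace from $\psi$ being a channel, membership in $S_{\mathcal{R},B}$ from Lemma \ref{l3}, and decay of $(\mathcal{R}\circ\phi)^n(C)$ from Corollary \ref{c4}; your direct uniqueness argument just makes explicit what the paper compresses into the splitting $L^1(\mathcal{M})\cong S_{\mathcal{R},B}\oplus\ker\psi$, and self-adjointness of $C$ is immediate since $A$ and $\psi(A)$ are both positive.
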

	\begin{proof}By lemma \ref{l3}, $\psi:L^1(\mathcal{M})\to L^1(\mathcal{M})$ is a projection onto $S_{\mathcal{R},B}$. Hence, \[L^1(\mathcal{M})\cong S_{\mathcal{R},B}\oplus\ker\,\psi\label{e1}\tag{1}\] as Banach spaces. Set $A_0=\psi(A)$ and $C=A-\psi(A)$. Since $\psi$ is a channel, $A_0$ is a state. By corollary \ref{c4}, $$(\mathcal{R}\circ\psi)^n(C)\to 0.$$ Uniqueness of the decomposition follows from $\eqref{e1}$.\end{proof}
	\medskip
	
	\begin{rem2}We have proved some interesting structural results for normal states on a von-Neumann algebra from the perspective of recoverability. It is to be noted that the norms $||.||_{B,p}$, from which the sandwiched quasi-relative entropies $\mathcal{S}_p$ arise, played an integral part in the proofs, even though the statements of the results themselves have nothing to do with $\mathcal{S}_p$. The question of norm convergence of $(\mathcal{R}\circ\phi)^n$ as operators on $L^1(\mathcal{M})$ is left open, we hope for some future progress on this. One possible direction is to look at the spectral radius of $\mathcal{R}\circ\phi$, when restricted to the invariant subspace $\ker\,\psi$.  \end{rem2}
	\medskip
	
	\begin{acknowledgement}
	The author thanks the Indian Statistical Institute for supporting his PhD financially.
	\end{acknowledgement}
	
	\textbf{Conflict of interest :} The author declares no conflict of interest.
	\medskip
	
	\textbf{Data availability statement :} No available data has been used.

\end{document}